\newcommand\EFFACE[1]{}
\newtheorem{theorem}{Theorem}
\newtheorem{proposition}[theorem]{Proposition}
\newtheorem{corollary}[theorem]{Corollary}
\newtheorem{observation}[theorem]{Observation}
\newenvironment{proof}{
\par
\noindent {\bf Proof.}\rm}{\mbox{}\hfill$\square$\par\vskip 3mm}
\def\ZZZ{\mathbb{Z}}
\let\@fnsymbol\@arabic
\begin{document}


\title{{\bf 2-Distance Colorings of Integer Distance Graphs}} 


\author{Brahim BENMEDJDOUB~\thanks{Faculty of Mathematics, Laboratory L'IFORCE, University of Sciences and Technology
Houari Boumediene (USTHB), B.P.~32 El-Alia, Bab-Ezzouar, 16111 Algiers, Algeria.}
\and \'Eric SOPENA~\thanks{Univ. Bordeaux, LaBRI, UMR5800, F-33400 Talence, France.}~$^,$\thanks{CNRS, LaBRI, UMR5800, F-33400 Talence, France.}~$^,$\footnote{Corresponding author. Eric.Sopena@labri.fr.}
\and Isma BOUCHEMAKH~\footnotemark[1]
}

\maketitle

\abstract{
A 2-distance $k$-coloring of a graph $G$ is a mapping from $V(G)$ to the set of colors $\{1,\dots,k\}$
such that every two vertices at distance at most 2 receive distinct colors.
The 2-distance chromatic number $\chi_2(G)$ of $G$ is then the mallest $k$ for which $G$
admits a 2-distance $k$-coloring.
For any finite set of positive integers $D=\{d_1,\dots,d_k\}$, 
the integer distance graph $G=G(D)$ is the infinite graph
defined by $V(G)=\ZZZ$ and $uv\in E(G)$ if and only if $|v-u|\in D$.
We study the 2-distance chromatic number of integer distance graphs for several types of sets $D$.
In each case, we provide exact values or upper bounds on this parameter 
and characterize those graphs $G(D)$ with $\chi_2(G(D))=\Delta(G(D))+1$.
}

\medskip

\noindent
{\bf Keywords:} 2-distance coloring; 
Integer distance graph. 

\noindent
{\bf MSC 2010:} 05C15, 05C12.

\section{Introduction}

All the graphs we consider in this paper are simple and loopless undirected graphs. 
We denote by $V(G)$ and $E(G)$ the set of vertices and the set of edges of a graph $G$, respectively.
For any two vertices $u$ and $v$ of $G$, we denote by $d_G(u,v)$ the \emph{distance} between $u$ and $v$,
that is the length of a shortest path joining $u$ and $v$.
We denote by $\Delta(G)$ the maximum degree of $G$.

A (proper) \emph{$k$-coloring} of a graph $G$ is a mapping from $V(G)$ to the set of colors $\{1,\dots,k\}$
such that every two adjacent vertices receive distinct colors.
The smallest $k$ for which $G$
admits a $k$-coloring is the \emph{chromatic number} of $G$, denoted $\chi(G)$.
A \emph{2-distance $k$-coloring} of a graph $G$ is a mapping from $V(G)$ to the set of colors $\{1,\dots,k\}$
such that every two vertices at distance at most 2 receive distinct colors. 
2-distance colorings are sometimes called \emph{L(1,1)-labelings} (see~\cite{Ca11} for a survey
on $L(h,k)$-labelings) or \emph{square colorings} in the literature.
The smallest $k$ for which $G$
admits a 2-distance $k$-coloring is the \emph{2-distance chromatic number} of $G$, denoted $\chi_2(G)$.

The \emph{square} $G^2$ of a graph $G$ is the graph defined by $V(G^2)=V(G)$ and $uv\in E(G^2)$
if and only if $d_G(u,v)\le 2$.
Clearly, a 2-distance coloring of a graph $G$ is nothing but a proper coloring of $G^2$ and, therefore,
$\chi_2(G)=\chi(G^2)$ for every graph~$G$.

The study of 2-distance colorings was initiated by Kramer and Kramer~\cite{KK69} (see also their survey 
on general distance colorings in~\cite{KK08}).
The case of planar graphs has attracted a lot of attention
in the literature (see e.g. \cite{AH03,BLP14a,BLP14b,BI09,DKNS08,LW06,WL03}),
due to the conjecture of Wegner that suggests an upper bound
on the 2-distance chromatic number of planar graphs depending on their maximum
degree (see~\cite{W77} for more details).

\medskip

In this paper, we study 2-distance 
colorings of 
distance graphs.
Although several coloring problems have been considered for distance graphs (see~\cite{Li08} for a survey),
it seems that 2-distance 
colorings have not been considered yet.
We present in Section~\ref{sec:preliminaries} a few basic results on the chromatic number of distance graphs.
We then consider specific sets $D$, namely
$D=\{1,a\}$, $a\ge 3$ (in Section~\ref{sec:1a}),
$D=\{1,a,a+1\}$, $a\ge 3$ (in Section~\ref{sec:1aa+1}),
and $D=\{1,\ldots,m,a\}$, $2\le m<a$ (in Section~\ref{sec:1toma}).
We finally 
propose some open problems in Section~\ref{sec:discussion}.


\section{Preliminaries}
\label{sec:preliminaries}

Let $D=\{d_1,\dots,d_k\}$ be a finite set of positive integers. 
The \emph{integer distance graph} (simply called {\em distance graph} in the following) $G=G(D)$ is the infinite graph
defined by $V(G)=\ZZZ$ and $uv\in E(G)$ if and only if $|v-u|\in D$.

If $\gcd(\{d_1,\dots,d_k\})=p>1$, the distance graph $G(D)$ 
has $p$ connected components, each of them being isomorphic to
the distance graph $G(D')$ with $D'=\{d_1/p,\dots,d_k/p\}$.
In that case, we thus have $\chi_2(G(D))=\chi_2(G(D'))$
so that we can always assume $\gcd(D)=1$.

It is easy to observe that the square of the distance graph $G(D)$ is
also a distance graph, namely the distance graph $G(D^2)$ where
$$D^2=D\ \cup\ \{d+d'\ /\ d,d'\in D\}\ \cup\ \{d-d'\ /\ d,d'\in D,\ d>d'\}.$$
For instance, for $D=\{1,2,5\}$, we get $D^2=\{1,2,3,4,5,6,7,10\}$. Note that
if $D$ has cardinality $k$, then $D^2$ has cardinality at most $k(k+1)$.

As observed in the previous section, $\chi_2(G)=\chi(G^2)$ for every graph $G$.
Therefore, since $(G(D))^2=G(D^2)$, determining the 2-distance chromatic number of the distance graph
$G(D)$ reduces to determining the chromatic number of the
distance graph $G(D^2)$.
The problem of determining the chromatic number of distance graphs has been
extensively studied in the literature. 
When $|D|\le 2$, this question is easily solved, thanks to the following 
general upper bounds:

\begin{proposition}[folklore]
For every finite set of positive integers $D=\{d_1,\dots,d_k\}$
and every positive integer $p$ such that $d_i\not\equiv 0\pmod p$ for every $i$, $1\le i\le k$,
$\chi(G(D))\le p$.
\label{prop:k-upper-bound}
\end{proposition}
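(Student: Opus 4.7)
The plan is to exhibit an explicit proper $p$-coloring of $G(D)$ obtained from reduction modulo $p$. Concretely, I would define $c:\mathbb{Z}\to\{1,\dots,p\}$ by letting $c(v)$ be the unique element of $\{1,\dots,p\}$ congruent to $v$ modulo $p$. This is a very natural candidate because the hypothesis on $p$ is tailored precisely to the residues of the allowed differences.

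The verification is then a one-line check using the definition of $G(D)$. Take any edge $uv\in E(G(D))$; by definition $|v-u|=d_i$ for some $i\in\{1,\dots,k\}$. The hypothesis $d_i\not\equiv 0\pmod p$ gives $u\not\equiv v\pmod p$, hence $c(u)\neq c(v)$. Therefore $c$ is a proper coloring of $G(D)$ using at most $p$ colors, yielding $\chi(G(D))\le p$.

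There is essentially no obstacle here, which is why the statement is labelled folklore: the construction and its verification together take only a few lines, and the hypothesis is exactly what makes the residue coloring work. The only point worth emphasising in the write-up is that this is a purely periodic coloring of $\mathbb{Z}$ with period $p$, so that the same argument will be reused (with $D$ replaced by $D^2$) in later sections to bound $\chi_2(G(D))$ via Proposition~\ref{prop:k-upper-bound} applied to $G(D^2)$.
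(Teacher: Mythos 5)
Your proof is correct and is essentially identical to the paper's: both define the coloring by reduction modulo $p$ (the paper writes it as $\lambda(x)=1+(x\bmod p)$) and observe that the hypothesis $d_i\not\equiv 0\pmod p$ forces adjacent vertices into distinct residue classes. Nothing further is needed.
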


\begin{proof}
Let $\lambda:V(G(D))\longrightarrow\{1,\dots,p\}$ be the mapping
defined by 
$$\lambda(x)=1+(x\mod p),$$ 
for every integer $x\in\ZZZ$.
Since $d_i\not\equiv 0\pmod p$ for every $i$, $1\le i\le k$,
the mapping $\lambda$ is clearly a proper coloring of $G(D)$.
\end{proof}

\begin{theorem}[Walther~\cite{W90}]
For every finite set of positive integers $D$, 
$$\chi(G(D))\le|D|+1.$$
\label{th:Walther-D+1}
\end{theorem}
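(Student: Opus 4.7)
The plan is to establish $\chi(G(D))\le|D|+1$ by a greedy coloring argument along a well-chosen enumeration of $\ZZZ$, in the spirit of the classical observation that every $k$-degenerate graph is $(k+1)$-colorable.

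First, I would enumerate the vertices of $G(D)$ as
$x_0=0$, $x_1=1$, $x_2=-1$, $x_3=2$, $x_4=-2,\dots$, that is, $x_{2m-1}=m$ and $x_{2m}=-m$ for every $m\ge 1$, so that the integers already listed at any stage form an interval around~$0$. Then I would check that each $x_n$ has at most $|D|$ neighbors in $G(D)$ among $\{x_0,\dots,x_{n-1}\}$: if $x_n=m>0$, the neighbors of $m$ in $G(D)$ are the integers $m\pm d_i$ with $d_i\in D$, and since the $|D|$ values $m+d_i$ are all strictly larger than~$m$ (hence appear later in the enumeration), only the $m-d_i$ can be back-neighbors; this gives at most $|D|$ of them, and the case $x_n=-m<0$ is symmetric, with only the $-m+d_i$ as candidates.

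Finally, I would greedily color $x_0,x_1,x_2,\dots$ in turn with the palette $\{1,\dots,|D|+1\}$, assigning to each $x_n$ the smallest color not already used by one of its back-neighbors. The previous step guarantees that at each stage at least one of the $|D|+1$ colors is free, so the procedure succeeds and produces a proper coloring of~$G(D)$. The only step that really requires any thought is the back-neighbor bound, which is immediate from the fact that every element of $D$ is strictly positive; in particular, no compactness argument (such as de Bruijn--Erd\H{o}s) is needed to pass from finite subgraphs to the infinite graph~$G(D)$, since the enumeration yields an explicit construction.
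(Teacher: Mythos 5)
Your proof is correct and follows essentially the same route as the paper: a First-Fit greedy coloring along an enumeration of $\ZZZ$ starting at $0$ in which every vertex has at most $|D|$ previously colored neighbors. Your interleaved ordering $0,1,-1,2,-2,\dots$ is only a cosmetic variant of the paper's ``left to right and then right to left'' ordering, and you simply make the back-degree bound explicit where the paper leaves it as an easy remark.
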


\begin{proof}
A $(|D|+1)$-coloring of $G(D)$ can easily be produced using the First-Fit greedy algorithm, starting from vertex 0,
from left to right and then from right to left, since every vertex has exactly $|D|$ neighbors on its left and 
$|D|$ neighbors on its right.
\end{proof}

Therefore, when $|D|\le 2$, $\chi(G(D))=2$ if $|D|=1$ or all elements in $D$ are odd (since $G(D)$ is then bipartite),
and $\chi(G(D))=3$ otherwise (since $G(D)$ then contains cycles of odd length).
The case $|D|=3$ has been settled by Zhu~\cite{Z02}.
Whenever $|D|\ge 4$, only partial results have been obtained, namely for sets $D$
having specific properties.

A coloring $\lambda$ of a distance graph $G(D)$ is \emph{$p$-periodic}, for some
integer $p\ge 1$, if $\lambda(x+p)=\lambda(x)$ for every $x\in\ZZZ$. 
Walther also proved the following:

\begin{theorem}[Walther~\cite{W90}]
For every finite set of positive integers $D$, if $\chi(G(D))\le k$ then $G(D)$
admits a $p$-periodic $k$-coloring for some $p$.
\label{th:Walther-periodic}
\end{theorem}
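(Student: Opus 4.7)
The plan is to exploit the fact that, because every edge of $G(D)$ has length at most $d:=\max(D)$, whether a coloring of $G(D)$ is proper depends only on the local pattern of $d$ consecutive colors. A pigeonhole argument on such local patterns should let me extract a finite window of an existing proper $k$-coloring and then repeat it indefinitely.

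More precisely, I would fix any proper $k$-coloring $\lambda$ of $G(D)$ and, for each integer $i$, introduce the \emph{profile} $\pi(i):=(\lambda(i),\lambda(i+1),\ldots,\lambda(i+d-1))\in\{1,\ldots,k\}^d$. Since there are only $k^d$ possible profiles, among the $k^d+1$ sample positions $0,d,2d,\ldots,k^d d$ two must share the same profile, say $\pi(i)=\pi(j)$ with $i<j$ and both multiples of $d$. Setting $p:=j-i$, this yields a period $p\ge d$ together with the shift identity $\lambda(i+s)=\lambda(j+s)$ for every $0\le s<d$.

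Then I would define $\lambda':\ZZZ\to\{1,\ldots,k\}$ by $\lambda'(x):=\lambda(i+((x-i)\bmod p))$, which is $p$-periodic by construction, and check properness. For an edge $xy$ with $x<y$ and $\delta:=y-x\in D$, write $x-i=ap+s$ with $0\le s<p$, so that $\lambda'(x)=\lambda(i+s)$. If $s+\delta<p$, then $\lambda'(y)=\lambda(i+s+\delta)$ and properness of $\lambda$ applied to the edge between $i+s$ and $i+s+\delta$ closes the case. Otherwise $t:=s+\delta-p$ lies in $[0,d)$, and the shift identity gives $\lambda'(y)=\lambda(i+t)=\lambda(j+t)=\lambda(i+s+\delta)$, which again differs from $\lambda(i+s)$ by properness of $\lambda$.

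I do not anticipate any serious obstacle: the whole argument reduces to a pigeonhole on finite profiles. The one subtlety to watch is ensuring that the period $p$ is at least $d$; this is why I take profiles of length exactly $d$ \emph{and} sample at positions spaced by $d$. Without that care, a small period could create short adjacencies crossing the window boundary that were never checked for properness in $\lambda$, and the reduction in the second case above would break down.
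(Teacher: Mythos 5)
Your proof is correct. Note that the paper itself states this theorem without proof, attributing it to Walther, so there is no in-paper argument to compare against; your pigeonhole on length-$d$ windows is a standard and complete way to establish the result. The two points that need care are both handled properly: sampling at positions that are multiples of $d$ guarantees $p\ge d$, so an edge of length $\delta\le d$ wraps around the period at most once (indeed $s+\delta<2p$), and in the wrap-around case $t=s+\delta-p<\delta\le d$, so the profile identity $\lambda(i+t)=\lambda(j+t)=\lambda(i+s+\delta)$ applies and yields $\lambda'(y)=\lambda(i+s+\delta)\ne\lambda(i+s)=\lambda'(x)$ by properness of $\lambda$.
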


The \emph{pattern} of such a $p$-periodic coloring is defined as the sequence $\lambda(x)\dots\lambda(x+p-1)$.
In particular,
the coloring defined in the proof of Proposition~\ref{prop:k-upper-bound} was $p$-periodic
with pattern $12\dots p$.
In the following, we will describe such patterns using standard notation of Combinatorics on words.
For instance, the pattern $121212345$ will be denoted $(12)^3345$.

Finally, note that in any 2-distance coloring of a graph $G$, all vertices in the closed neighborhood
of any vertex must be assigned distinct colors. Therefore, we have the following:

\begin{observation}
For every graph $G$, $\chi_2(G)\ge\Delta(G)+1$.
\label{obs:lower-bound}
\end{observation}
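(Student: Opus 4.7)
The statement is essentially immediate from the definition, so the plan is short: I would exhibit a set of $\Delta(G)+1$ vertices in $G$ that must pairwise receive distinct colors under any 2-distance coloring.

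First, I would pick a vertex $v \in V(G)$ with $\deg_G(v) = \Delta(G)$, and consider its closed neighborhood $N[v] = \{v\} \cup N(v)$, which contains exactly $\Delta(G)+1$ vertices. Next, I would verify that any two distinct vertices in $N[v]$ are at distance at most $2$ in $G$: for $v$ and any neighbor $u$, we have $d_G(v,u) = 1$; and for two neighbors $u, u' \in N(v)$, the path $u v u'$ has length $2$, so $d_G(u,u') \le 2$. Hence $N[v]$ forms a clique in the square $G^2$.

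Finally, since any 2-distance coloring of $G$ is a proper coloring of $G^2$ (as noted earlier in the paper), it must assign $|N[v]| = \Delta(G)+1$ pairwise distinct colors to the vertices of $N[v]$. Therefore $\chi_2(G) = \chi(G^2) \ge \Delta(G)+1$.

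There is no real obstacle here; the only thing to be careful about is making the trivial observation that pairs of neighbors of $v$ are at distance at most $2$ (not necessarily exactly $2$, since they could also be adjacent), which is harmless because the definition of 2-distance coloring covers distance exactly $1$ as well.
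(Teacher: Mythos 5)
Your proof is correct and follows exactly the same approach as the paper, which justifies the observation by noting that all vertices in the closed neighborhood of any vertex must receive distinct colors in a 2-distance coloring; you simply spell out the details (choosing a vertex of maximum degree and checking the pairwise distances in $N[v]$). Nothing is missing.
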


In particular, this bound is attained by the distance graph $G(D)$ with
$D=\{1,\dots,k\}$, $k\ge 2$:

\begin{proposition}
For every $k\ge 2$, $\chi_2(G(\{1,\dots,k\}))=2k+1=\Delta(G(\{1,\dots,k\}))+1$.
\label{prop:chi2-1k}
\end{proposition}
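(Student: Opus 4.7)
The plan is to show both bounds. For the lower bound, I would first note that every vertex of $G=G(\{1,\dots,k\})$ has exactly $k$ neighbors to its left (at distances $1,\dots,k$) and $k$ neighbors to its right, so $\Delta(G)=2k$. Observation~\ref{obs:lower-bound} then immediately gives $\chi_2(G)\ge\Delta(G)+1=2k+1$.

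For the upper bound, I would use the reduction $\chi_2(G(D))=\chi(G(D^2))$ already established in the preliminaries. A direct computation of $D^2$ when $D=\{1,\dots,k\}$ gives
$$D^2=\{1,\dots,k\}\cup\{d+d':d,d'\in D\}\cup\{d-d':d,d'\in D,\ d>d'\}=\{1,\dots,2k\},$$
since sums range over $\{2,\dots,2k\}$ and differences over $\{1,\dots,k-1\}$. Therefore $G^2=G(\{1,\dots,2k\})$.

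Now I would apply Proposition~\ref{prop:k-upper-bound} to $G(\{1,\dots,2k\})$ with $p=2k+1$. Since every $d_i\in\{1,\dots,2k\}$ satisfies $1\le d_i\le 2k<2k+1$, we have $d_i\not\equiv 0\pmod{2k+1}$ for every $i$, so $\chi(G(\{1,\dots,2k\}))\le 2k+1$, yielding $\chi_2(G)\le 2k+1$. Combining the two inequalities gives the claimed equality.

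There is essentially no obstacle here: the two nontrivial ingredients (the computation of $D^2$ and the applicability of Proposition~\ref{prop:k-upper-bound}) are both routine, and the periodic coloring with pattern $12\cdots(2k+1)$ witnesses the upper bound explicitly.
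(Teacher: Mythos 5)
Your proof is correct and is essentially the paper's argument: the witness coloring you obtain from Proposition~\ref{prop:k-upper-bound} applied to $G(\{1,\dots,2k\})$ is exactly the pattern $\lambda(x)=1+(x\bmod (2k+1))$ that the paper writes down directly, and the lower bound via $\Delta(G)=2k$ and Observation~\ref{obs:lower-bound} is the same. Routing the verification through the computation $D^2=\{1,\dots,2k\}$ is a slightly more systematic way of doing the check the paper leaves to the reader, but it is not a different approach.
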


\begin{proof}
It is easy to check that the mapping $\lambda$
given by 
$$\lambda(x)=1+(x\mod{2k+1})$$
 for every $x\in\ZZZ$ is a 2-distance $(2k+1)$-coloring of $G(\{1,\dots,k\})$.
Equality then follows from Observation~\ref{obs:lower-bound}.
\end{proof}

%
%


\section{The case $D=\{1,a\}$, $a\ge 3$}
\label{sec:1a}

We study in this section the 2-distance 
chromatic number of distance graphs $G(D)$
with $D=\{1,a\}$, $a\ge 3$ 
(note that the case $a=2$ is already solved by Proposition~\ref{prop:chi2-1k}).

When $D=\{1,a\}$, $a\ge 3$, 
we have $\Delta(G(D))=4$ and
$$D^2=\{1,2,a-1,a,a+1,2a\}.$$

The following theorem gives the 2-distance chromatic number of any such graph:

\begin{theorem}
For every integer $a\ge 3$,
$$\chi_2(G(\{1,a\}))=\left\{
\begin{array}{ll}
5 & \mbox{if  $a\equiv 2\pmod 5$, or $a\equiv 3\pmod 5$,}\\
6 & \mbox{otherwise.}
\end{array}
\right.
$$
\label{th:chi_2-dist-1a}
\end{theorem}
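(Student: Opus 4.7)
The plan is to show separately $\chi_2(G(\{1,a\}))\ge 5$, the upper bound $5$ when $a\equiv 2,3\pmod 5$, and then the matching bounds $6$ in the remaining cases. Since $\Delta(G(\{1,a\}))=4$, Observation~\ref{obs:lower-bound} immediately gives $\chi_2(G(\{1,a\}))\ge 5$, so throughout the rest of the argument I work with $\chi(G(D^2))$, where $D^2=\{1,2,a-1,a,a+1,2a\}$.

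For the upper bound of $5$ when $a\equiv 2,3\pmod 5$, I would apply Proposition~\ref{prop:k-upper-bound} to $G(D^2)$ with $p=5$: none of $1,2$ is divisible by $5$; the elements $a-1,a,a+1$ together avoid $0\bmod 5$ exactly when $a\not\equiv 0,1,4\pmod 5$; and $2a$ avoids $0\bmod 5$ exactly when $5\nmid a$. Combining, the pattern $12345$ is a proper $5$-coloring of $G(D^2)$ in precisely the cases $a\equiv 2,3\pmod 5$.

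For the upper bound of $6$ in the remaining cases $a\equiv 0,1,4\pmod 5$, I would construct explicit periodic $6$-colorings (appealing to Theorem~\ref{th:Walther-periodic}, which guarantees we may look for a periodic witness). The first attempt is the pattern $123456$ of period $6$, valid via Proposition~\ref{prop:k-upper-bound} whenever no element of $D^2$ is divisible by $6$, i.e.\ whenever $a\equiv 2,4\pmod 6$. The leftover residues of $a$ modulo $30$ — there are six of them — would be handled one by one, by exhibiting a tailored periodic pattern (most likely with period a small multiple of $6$, possibly $11$, $12$ or $13$) and checking, for each of the six distances in $D^2$, that the shift by that distance always maps the pattern to a different color.

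The main difficulty is the lower bound $\chi_2\ge 6$ when $a\equiv 0,1,4\pmod 5$. Here I would first verify that $\{-(a-1),0,1,2,a+1\}$ and $\{0,a-1,a,a+1,2a\}$ each induce a $K_5$ in $G(D^2)$, so that any hypothetical $5$-coloring $\lambda$ must satisfy, for every $k\in\ZZZ$,
\[
\{\lambda(k-a+1),\lambda(k+a+1)\}=\{1,\dots,5\}\setminus\{\lambda(k),\lambda(k+1),\lambda(k+2)\}.
\]
Comparing this constraint at consecutive shifts $k$ and $k+1$, the symmetric difference forces rigid identities of the form $\lambda(k-a+1)$ and $\lambda(k-a+2)$ being determined by the triangle triples at $k,k+1,k+2$ and $k+1,k+2,k+3$. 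Iterating these identities (and using the parallel rule coming from the second $K_5$, which shifts the interaction by $a-1$) should propagate a near-periodicity of $\lambda$ whose period is governed by $\gcd(5,a-1,a,a+1,2a)$. For $a\equiv 0,1,4\pmod 5$ this forced period clashes with one of the distances in $D^2$: specifically, the propagation forces $\lambda$ to be constant along an arithmetic progression of common difference dividing an element of $D^2$, contradicting the fact that this element is an edge. The main obstacle will be organizing the propagation cleanly across the three subcases $a\equiv 0,1,4\pmod 5$, because which element of $D^2$ yields the contradiction depends on the residue of $a$; a careful tabulation of the triangle types $(\lambda(k),\lambda(k+1),\lambda(k+2))$ together with the shift rule above should make this case split manageable.
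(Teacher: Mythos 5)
Your lower bound of $5$ and the upper bound of $5$ for $a\equiv 2,3\pmod 5$ match the paper exactly, but the other two parts have genuine gaps. The most serious one is in your plan for the upper bound $6$: no finite collection of fixed periods can work. A $p$-periodic coloring satisfies $\lambda(x)=\lambda(x+a)$ whenever $p\mid a$, which already violates properness since $x$ and $x+a$ are adjacent in $G(\{1,a\})$; more generally a $p$-periodic $2$-distance coloring requires $p\nmid d$ for every $d\in\{1,2,a-1,a,a+1,2a\}$. So if you fix any finite set $P$ of candidate periods (your ``$11$, $12$ or $13$'' or any small multiples of $6$), the choice $a=5\prod_{p\in P}p$ lies in the class $a\equiv 0\pmod 5$ that needs a $6$-coloring, yet defeats every period in $P$. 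The paper avoids this by letting the period grow with $a$: it uses patterns of period $2a-1$, $2a-2$ or $2a+1$ (such as $(123)^k(456)^{k-1}45$ for $a=3k$), split according to $a\bmod 3$. (Also, your count of ``six'' leftover residues modulo $30$ is off --- of the $18$ residues with $a\equiv 0,1,4\pmod 5$, the pattern $123456$ covers only $4,10,14,16,20,26$, leaving twelve --- and mixing periods $11$ and $13$ with residues modulo $30$ is not coherent, since the validity of a period-$p$ pattern depends on $a\bmod p$.)

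For the lower bound $6$, your starting identity $\{\lambda(k-a+1),\lambda(k+a+1)\}=\{1,\dots,5\}\setminus\{\lambda(k),\lambda(k+1),\lambda(k+2)\}$ is correct (both of your $5$-sets are indeed cliques of $G(D^2)$), but the proposed propagation is not a proof as it stands: the quantity $\gcd(5,a-1,a,a+1,2a)$ that you claim governs the forced period equals $1$ for every $a$, so the stated mechanism cannot be what produces the contradiction, and phrases like ``should propagate'' leave the actual case analysis undone. The organizing idea you are missing is the paper's intermediate claim that in any $2$-distance $5$-coloring \emph{every five consecutive vertices receive five distinct colors} (proved by a short two-case analysis using the clique $\{x-a,x-1,x,x+1,x+a\}$ and deriving $\lambda(3-a)=\lambda(3+a)$, impossible since $d(3-a,3+a)=2$). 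This immediately forces $\lambda(x+5)=\lambda(x)$ for all $x$, and $5$-periodicity is compatible with $D^2=\{1,2,a-1,a,a+1,2a\}$ precisely when $a\equiv 2,3\pmod 5$, which is the clean way to close the argument. I recommend restructuring your lower-bound section around that claim rather than an open-ended propagation.
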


\begin{proof}
Since $\{1,a\}^2=\{1,2,a-1,a,a+1,2a\}$,
we get $d\not\equiv 0\pmod 5$ for every $d\in \{1,a\}^2$
whenever $a\equiv 2\pmod 5$ or $a\equiv 3\pmod 5$
 and thus, by Proposition~\ref{prop:k-upper-bound}
and Observation~\ref{obs:lower-bound},
$\chi_2(G(\{1,a\}))=5$.

Note  that for every $x\in\ZZZ$, the set of vertices 
$$C(x)=\{x-a,x-1,x,x+1,x+a\}$$ 
induces a clique in $G(\{1,a\}^2)$  (see Figure~\ref{fig:1a}).
We now claim that every 2-distance 5-coloring $\lambda$ of $G(\{1,a\})$ is necessarily 5-periodic,
that is $\lambda(x+5)=\lambda(x)$ for every $x\in\ZZZ$.
To show that, it suffices to prove that any five consecutive vertices $x,\dots,x+4$ must be assigned distinct colors.
Assume to the contrary that this is not the case and, without loss of generality, let $x=0$.
Since vertices 0, 1 and 2 necessarily get distinct colors, we only have two cases to consider:

\begin{enumerate}

\item $\lambda(0)=\lambda(3)=1$, $\lambda(1)=2$, $\lambda(2)=3$.\\
Since $C(1)$ induces a clique in $G(\{1,a\}^2)$ (depicted in bold in Figure~\ref{fig:1a}), we have 
$$\{\lambda(1-a),\lambda(1+a)\}=\{4,5\},$$
which implies 
$$\{\lambda(2-a),\lambda(2+a)\}=\{4,5\}.$$
(More precisely, $\lambda(2-a)=9-\lambda(1-a)$ and $\lambda(2+a)=9-\lambda(1+a)$).
This implies $\lambda(3-a)=\lambda(3+a)=2$, a contradiction since $d(3-a,3+a)=2$.

\item $\lambda(0)=\lambda(4)=1$, $\lambda(1)=2$, $\lambda(2)=3$, $\lambda(3)=4$.\\
As in the previous case we have 
$$\{\lambda(1-a),\lambda(1+a)\}=\{4,5\},$$
which implies 
$$\{\lambda(2-a),\lambda(2+a)\}=\{1,5\}.$$
We then get $\lambda(3-a)=\lambda(3+a)=2$, again a contradiction.

\end{enumerate}

\begin{figure}
\begin{center}
\begin{tikzpicture}[domain=0:17,x=1cm,y=1cm]
\node[draw] (A) at (0,0){{\bf 0}};
\node[draw] (B) at (2,0){{\bf 1}};
\node[draw] (C) at (4,0){{\bf 2}};
\node[draw] (D) at (6,0){3};
\node[draw] (E) at (8,0){4};

\node[draw] (Ap) at (0,2){a};
\node[draw] (Bp) at (2,2){{\bf 1+a}};
\node[draw] (Cp) at (4,2){2+a};
\node[draw] (Dp) at (6,2){3+a};
\node[draw] (Ep) at (8,2){4+a};

\node[draw] (Am) at (0,-2){-a};
\node[draw] (Bm) at (2,-2){{\bf 1-a}};
\node[draw] (Cm) at (4,-2){2-a};
\node[draw] (Dm) at (6,-2){3-a};
\node[draw] (Em) at (8,-2){4-a};

\node at (-2,0){$\dots$};
\node at (10,0){$\dots$};
\draw[thick] (A) to (-1,0);
\draw[thick] (Ap) to (-1,2);
\draw[thick] (Am) to (-1,-2);
\draw[thick] (E) to (9,0);
\draw[thick] (Ep) to (9,2);
\draw[thick] (Em) to (9,-2);

\draw[very thick] (A) to (B);
\draw[very thick] (B) to (C);
\draw[thick] (C) to (D);
\draw[thick] (D) to (E);
\draw[thick] (Ap) to (Bp);
\draw[thick] (Bp) to (Cp);
\draw[thick] (Cp) to (Dp);
\draw[thick] (Dp) to (Ep);
\draw[thick] (Am) to (Bm);
\draw[thick] (Bm) to (Cm);
\draw[thick] (Cm) to (Dm);
\draw[thick] (Dm) to (Em);

%
\draw[thick] (A) to (Ap);
\draw[thick] (A) to (Am);
\draw[very thick] (B) to (Bp);
\draw[very thick] (B) to (Bm);
\draw[thick] (C) to (Cp);
\draw[thick] (C) to (Cm);
\draw[thick] (D) to (Dp);
\draw[thick] (D) to (Dm);
\draw[thick] (E) to (Ep);
\draw[thick] (E) to (Em);

\end{tikzpicture}
\caption{The distance graph $G(\{1,a\})$, $a\ge 3$}
\label{fig:1a}

\end{center}
\end{figure}
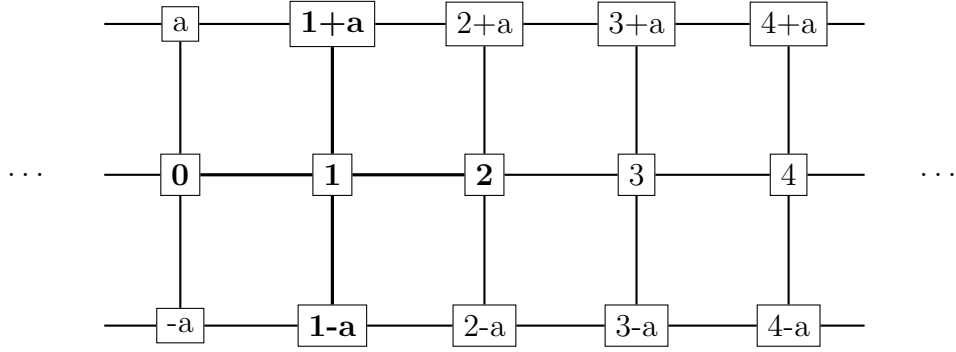

Therefore, $\chi_2(G(\{1,a\}))=5$ if and only if 5 do not divide any element of $\{1,a\}^2=\{1,2,a-1,a,a+1,2a\}$.
This is clearly the case if and only if $a\equiv 2\pmod 5$ or $a\equiv 3\pmod 5$. 

We finally prove that there exists a 2-distance 6-coloring of $G(\{1,a\})$ for any value of $a$.
We consider three cases, according to the value of $(a\mod 3)$:
\begin{enumerate}

\item $a=3k$, $k\ge 1$.\\
Let $\lambda$ be the $(2a-1)$-periodic mapping defined by the pattern
$$(123)^k(456)^{k-1}45.$$
If $\lambda(x)=\lambda(y)=c$, $1\le c\le 5$, then 
$$d(x,y)\in\{3q,\ 0\le q\le k-1\}\ \cup\ \{(2a-1)p+3q,\ p\ge 1,\ 1-k\le q\le k-1\}.$$
If $\lambda(x)=\lambda(y)=6$, then 
$$d(x,y)\in\{3q,\ 0\le q\le k-2\}\ \cup\ \{(2a-1)p+3q,\ p\ge 1,\ 2-k\le q\le k-2\}.$$
Therefore, in both cases, $d(x,y)\notin\{1,2,a-1,a,a+1,2a\}$, and thus $\lambda$ is a 2-distance 6-coloring of $G(\{1,a\})$.

\item $a=3k+1$, $k\ge 1$.\\
Let $\lambda$ be the $(2a-2)$-periodic mapping defined by the pattern
$$(123)^k(456)^{k}.$$
If $\lambda(x)=\lambda(y)=c$, $1\le c\le 6$, then 
$$d(x,y)\in\{3q,\ 0\le q\le k-1\}\ \cup\ \{(2a-2)p+3q,\ p\ge 1,\ 1-k\le q\le k-1\}.$$
Therefore, $d(x,y)\notin\{1,2,a-1,a,a+1,2a\}$, and thus $\lambda$ is a 2-distance 6-coloring of $G(\{1,a\})$.

\item $a=3k+2$, $k\ge 1$.\\
Let $\lambda$ be the $(2a+1)$-periodic mapping defined by the pattern
$$(123)^{k+1}(456)^{k}45.$$
If $\lambda(x)=\lambda(y)=c$, $1\le c\le 5$, then 
$$d(x,y)\in\{3q,\ 0\le q\le k\}\ \cup\ \{(2a+1)p+3q,\ p\ge 1,\ -k\le q\le k\}.$$
If $\lambda(x)=\lambda(y)=6$, then 
$$d(x,y)\in\{3q,\ 0\le q\le k-1\}\ \cup\ \{(2a+1)p+3q,\ p\ge 1,\ 1-k\le q\le k-1\}.$$
Therefore, in both cases, $d(x,y)\notin\{1,2,a-1,a,a+1,2a\}$, and thus $\lambda$ is a 2-distance 6-coloring of $G(\{1,a\})$.
\end{enumerate}
This concludes the proof.
\end{proof}


\section{The case $D=\{1,a,a+1\}$, $a\ge 3$}
\label{sec:1aa+1}

We study in this section the 2-distance 
chromatic number of distance graphs $G(D)$
with $D=\{1,a,a+1\}$, $a\ge 3$ 
(note that the case $a=2$ is already solved by Proposition~\ref{prop:chi2-1k}).


When $D=\{1,a,a+1\}$, $a\ge 3$, we have $\Delta(G(D))=6$ and
$$D^2=\{1,2,a-1,a,a+1,a+2,2a,2a+1,2a+2\}.$$

We first prove the following:

\begin{theorem}
For every integer $a$, $a\ge 3$, 
$$\chi_2(G(\{1,a,a+1\}))=7=\Delta(G(\{1,a,a+1\}))+1$$
 if and only if $a\equiv 2\pmod 7$ or $a\equiv 4\pmod 7$.
\label{th:chi_2-dist-1aa+1}
\end{theorem}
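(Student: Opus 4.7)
For the ``if'' direction, assume $a\equiv 2\pmod 7$ or $a\equiv 4\pmod 7$. A direct check of the nine residues of the elements of $D^2=\{1,2,a-1,a,a+1,a+2,2a,2a+1,2a+2\}$ modulo $7$ shows that none is divisible by $7$ in either case. Proposition~\ref{prop:k-upper-bound} applied with $p=7$ then yields a 2-distance $7$-coloring of $G(\{1,a,a+1\})$, and since $\Delta(G(\{1,a,a+1\}))=6$, Observation~\ref{obs:lower-bound} gives $\chi_2\ge 7$. Hence $\chi_2=7$ in these two cases.

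For the ``only if'' direction, suppose $a\equiv 0,1,3,5,$ or $6\pmod 7$. In each such case one of $a$, $a-1$, $2a+1$, $a+2$, or $a+1$ (respectively) is an element of $D^2$ divisible by $7$; call it $d$. The case $a=3$ is handled directly: then $D^2=\{1,\ldots,8\}$, so any $9$ consecutive integers form a clique in $G(D^2)$, giving $\chi_2\ge 9$. For $a\ge 5$ (the case $a=4$ falls under the ``if'' direction), the key structural fact is that for every $x\in\ZZZ$ the set $C(x)=\{x-a-1,x-a,x-1,x,x+1,x+a,x+a+1\}$ induces a $7$-clique in $G(D^2)=(G(\{1,a,a+1\}))^2$; hence any $7$-coloring $\lambda$ assigns all seven colors bijectively to $C(x)$. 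The plan is to prove that any such $\lambda$ must be $7$-periodic, i.e., $\lambda(x+7)=\lambda(x)$ for every $x$: once established, this yields the contradiction $\lambda(0)=\lambda(d)$, which violates the edge $\{0,d\}$ of $G(D^2)$.

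To prove the $7$-periodicity, I would compare consecutive cliques $C(x)$ and $C(x+1)$, whose intersection is the four-element set $\{x-a,x,x+1,x+a+1\}$. Since both cliques are rainbow, the color sets on $C(x)\setminus C(x+1)=\{x-a-1,x-1,x+a\}$ and $C(x+1)\setminus C(x)=\{x-a+1,x+2,x+a+2\}$ must coincide. The distance-$2$ edges $(x+a,x+a+2)$ and $(x-a-1,x-a+1)$ of $G(D^2)$ force $\lambda(x+a+2)\ne\lambda(x+a)$ and $\lambda(x-a+1)\ne\lambda(x-a-1)$, so the bijection between the two three-element color sets is restricted to three possible ``transition types,'' each encoding a specific cyclic movement or swap of colors among the subgroups $\{x-a-1,x-a\}$, $\{x-1,x,x+1\}$, and $\{x+a,x+a+1\}$ of $C(x)$. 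Composing any seven such transitions and exploiting further compatibility constraints from overlapping cliques will then force the initial color pattern to reappear, giving $\lambda(x+7)=\lambda(x)$. The main obstacle will be precisely this periodicity argument: the richer three-group structure of $C(x)$ makes the case analysis substantially more involved than the two-case $5$-periodicity argument used in the proof of Theorem~\ref{th:chi_2-dist-1a}, and care must be taken to rule out every non-periodic sequence of transitions before the contradiction $\lambda(0)=\lambda(d)$ can be invoked.
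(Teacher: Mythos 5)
Your ``if'' direction is complete and identical to the paper's: check that no element of $D^2=\{1,2,a-1,a,a+1,a+2,2a,2a+1,2a+2\}$ is divisible by $7$ when $a\equiv 2$ or $4\pmod 7$, apply Proposition~\ref{prop:k-upper-bound} with $p=7$, and combine with Observation~\ref{obs:lower-bound}. Your reduction of the ``only if'' direction is also the right one: you correctly note that for the other residues some $d\in D^2$ is divisible by $7$, that $C(x)=\{x-a-1,x-a,x-1,x,x+1,x+a,x+a+1\}$ is a $7$-clique in $G(D^2)$, and that the whole problem therefore reduces to showing that every 2-distance $7$-coloring is $7$-periodic, after which $\lambda(0)=\lambda(d)$ gives the contradiction.

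However, that periodicity claim is precisely the substance of the theorem, and you have not proved it --- you explicitly label it ``the plan'' and ``the main obstacle.'' The observation that $C(x)\setminus C(x+1)$ and $C(x+1)\setminus C(x)$ carry the same three colors is correct but only yields local constraints; classifying the resulting ``transition types'' and showing that no non-periodic infinite sequence of them is consistent is a genuinely nontrivial global argument, and you give no evidence it closes. For comparison, the paper proves periodicity by showing that any $7$ consecutive vertices must receive distinct colors, via a four-case analysis on where the first color repetition among $0,\dots,6$ occurs, each case exploiting several of the cliques $C(1),\dots,C(4)$ simultaneously and some cases requiring further subcases; this occupies most of the proof. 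Until you either carry out your transition-composition argument in full or reproduce an analysis of comparable depth, the ``only if'' direction is unproved. (Your separate treatment of $a=3$ via the $K_9$ on nine consecutive integers is correct but does not reduce the remaining work.)
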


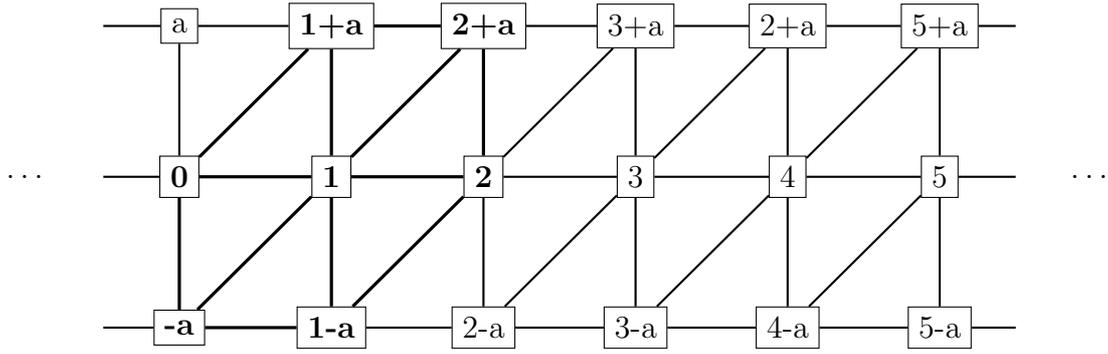
\begin{figure}
\begin{center}
\begin{tikzpicture}[domain=0:17,x=1cm,y=1cm]
\node[draw] (A) at (0,0){{\bf 0}};
\node[draw] (B) at (2,0){{\bf 1}};
\node[draw] (C) at (4,0){{\bf 2}};
\node[draw] (D) at (6,0){3};
\node[draw] (E) at (8,0){4};
\node[draw] (F) at (10,0){5};

\node[draw] (Ap) at (0,2){a};
\node[draw] (Bp) at (2,2){{\bf 1+a}};
\node[draw] (Cp) at (4,2){{\bf 2+a}};
\node[draw] (Dp) at (6,2){3+a};
\node[draw] (Ep) at (8,2){2+a};
\node[draw] (Fp) at (10,2){5+a};

\node[draw] (Am) at (0,-2){{\bf -a}};
\node[draw] (Bm) at (2,-2){{\bf 1-a}};
\node[draw] (Cm) at (4,-2){2-a};
\node[draw] (Dm) at (6,-2){3-a};
\node[draw] (Em) at (8,-2){4-a};
\node[draw] (Fm) at (10,-2){5-a};

\node at (-2,0){$\dots$};
\node at (12,0){$\dots$};
\draw[thick] (A) to (-1,0);
\draw[thick] (Ap) to (-1,2);
\draw[thick] (Am) to (-1,-2);
\draw[thick] (F) to (11,0);
\draw[thick] (Fp) to (11,2);
\draw[thick] (Fm) to (11,-2);

\draw[very thick] (A) to (B);
\draw[very thick] (B) to (C);
\draw[thick] (C) to (D);
\draw[thick] (D) to (E);
\draw[thick] (E) to (F);
\draw[thick] (Ap) to (Bp);
\draw[very thick] (Bp) to (Cp);
\draw[thick] (Cp) to (Dp);
\draw[thick] (Dp) to (Ep);
\draw[thick] (Ep) to (Fp);
\draw[very thick] (Am) to (Bm);
\draw[thick] (Bm) to (Cm);
\draw[thick] (Cm) to (Dm);
\draw[thick] (Dm) to (Em);
\draw[thick] (Em) to (Fm);

%
\draw[thick] (A) to (Ap);
\draw[very thick] (A) to (Am);
\draw[very thick] (B) to (Bp);
\draw[very thick] (B) to (Bm);
\draw[very thick] (C) to (Cp);
\draw[thick] (C) to (Cm);
\draw[thick] (D) to (Dp);
\draw[thick] (D) to (Dm);
\draw[thick] (E) to (Ep);
\draw[thick] (E) to (Em);
\draw[thick] (F) to (Fp);
\draw[thick] (F) to (Fm);

\draw[very thick] (A) to (Bp);
\draw[very thick] (B) to (Cp);
\draw[very thick] (B) to (Am);
\draw[thick] (C) to (Dp);
\draw[very thick] (C) to (Bm);
\draw[thick] (D) to (Ep);
\draw[thick] (D) to (Cm);
\draw[thick] (E) to (Fp);
\draw[thick] (E) to (Dm);
\draw[thick] (F) to (Em);

\end{tikzpicture}
\caption{The distance graph $G(\{1,a,a+1\})$, $a\ge 3$}
\label{fig:1aa+1}

\end{center}
\end{figure}

\begin{proof}
Since $\{1,a,a+1\}^2=\{1,2,a-1,a,a+1,a+2,2a,2a+1,2a+2\}$,
we get $d\not\equiv 0\pmod 7$ for every $d\in \{1,a,a+1\}^2$
whenever $a\equiv 2\pmod 7$ or $a\equiv 4\pmod 7$
 and thus, by Proposition~\ref{prop:k-upper-bound}
and Observation~\ref{obs:lower-bound},
$\chi_2(G(\{1,a,a+1\}))=7$.

Note  that for every $x\in\ZZZ$, the set of vertices 
$$C(x)=\{x-a-1,x-a,x-1,x,x+1,x+a,x+a+1\}$$ 
induces a clique in $G(\{1,a,a+1\}^2)$.
We now claim that every 2-distance $7$-coloring $\lambda$ of $G(\{1,a,a+1\})$ is necessarily $7$-periodic,
that is $\lambda(x+7)=\lambda(x)$ for every $x\in\ZZZ$.
To show that, it suffices to prove that any $7$ consecutive vertices $x,\dots,x+6$ must be assigned distinct colors.
Assume to the contrary that this is not the case and, without loss of generality, let $x=0$.
Since vertices $0$, $1$ and $2$ necessarily get distinct colors, we only have four cases to consider (see Figure~\ref{fig:1aa+1}):

\begin{enumerate}

\item Vertices $0,1,2,3$ are colored with the colors $1,2,3$ and 1, respectively.\\
We consider two subcases:
\begin{enumerate}
\item $\lambda(4)=2$.\\
Since $C(1)$ induces a clique in $G(\{1,a,a+1\}^2)$ (depicted in bold in Figure~\ref{fig:1aa+1}), we have 
$$\{\lambda(-a),\lambda(1-a),\lambda(1+a),\lambda(2+a)\}=\{4,5,6,7\}.$$
For similar reasons, we also have 
$$\{\lambda(2-a),\lambda(3-a),\lambda(3+a),\lambda(4+a)\}=\{4,5,6,7\}.$$
This implies $\lambda(-a)=\lambda(4-a)$ or $\lambda(1+a)=\lambda(5+a)$. Each of these cases thus corresponds
to case 2 below.

\item $\lambda(4)\notin\{1,2,3\}$.\\
Assume $\lambda(4)=4$, without loss of generality.
As in the previous subcase, we have
$$\{\lambda(-a),\lambda(1-a),\lambda(1+a),\lambda(2+a)\}=\{4,5,6,7\},$$
and, similarly,
$$\{\lambda(1-a),\lambda(2-a),\lambda(2+a),\lambda(3+a)\}=\{4,5,6,7\}.$$
Moreover, since $\lambda(4)=4$, we get
$$\{\lambda(3+a),\lambda(2-a)\}\subseteq\{5,6,7\}.$$
On the other hand, considering the clique 
$S(3)$ in $G(\{1,a,a+1\}^2)$, we also get
$$\{\lambda(4+a),\lambda(3-a)\}\subseteq\{5,6,7\}.$$
We thus get a contradiction since we only have three available colors
for the clique induced by the four vertices $2-a$, $3-a$, $a+3$ and $a+4$ in $G(\{1,a,a+1\}^2)$.
\end{enumerate}

\item Vertices $0,1,2,3,4$ are colored with the colors $1,2,3,4$ and 1, respectively.\\
Again considering cliques $C(2)$ and $C(3)$ in $G(\{1,a,a+1\}^2)$, we get
$$\{\lambda(1-a),\lambda(2+a)\}\subseteq\{5,6,7\},$$
and
$$\{\lambda(2-a),\lambda(3+a)\}\subseteq\{5,6,7\},$$
a contradiction since vertices $1-a$, $2-a$, $a+2$ and $a+3$ induce a clique in $G(\{1,a,a+1\}^2)$.

\item Vertices $0,1,2,3,4,5$ are colored with the colors $1,2,3,4,5$ and 1, respectively.\\
Considering the cliques $C(1)$, $C(2)$ and $C(3)$ in $G(\{1,a,a+1\}^2)$, we get
$$\{\lambda(-a),\lambda(1-a),\lambda(1+a),\lambda(2+a)\}=\{4,5,6,7\},$$
$$\{\lambda(2-a),\lambda(3+a)\} \subseteq \{1,\lambda(-a),\lambda(1+a)\}\setminus\{4,5\},$$
$$\{\lambda(3-a),\lambda(4+a)\} \subseteq \{2,\lambda(1-a),\lambda(2+a)\}\setminus\{4,5\},$$
and thus
$$\{\lambda(2-a),\lambda(3+a)\} \subseteq \{1,6,7\}\ \ \mbox{and}\ \ \{\lambda(3-a),\lambda(4+a)\} \subseteq \{2,6,7\}.$$
Assuming that none of cases 1 or 2 occurs, we have two subcases to consider:

\begin{enumerate}
\item $\lambda(6)=2$.\\
Considering the clique $C(4)$ in $G(\{1,a,a+1\}^2)$, we get
$$\{\lambda(4-a),\lambda(5+a)\} \subseteq \{3,\lambda(2-a),\lambda(3+a)\}\setminus\{1,2\}=\{3,6,7\}.$$
If $\{\lambda(4-a),\lambda(5+a)\}=\{3,6\}$, then 
$$\{\lambda(3-a),\lambda(4+a)\}=\{2,7\},$$
$$\{\lambda(2-a),\lambda(3+a)\}=\{1,6\},$$ 
$$\{\lambda(1-a),\lambda(2+a)\}=\{5,7\}$$
and 
$$\{\lambda(-a),\lambda(1+a)\}=\{4,6\}.$$
If $\lambda(-a)=6$ then $\lambda(2-a)=1$ and thus $\lambda(4-a)=\lambda(-a)=6$ which corresponds to subcase 2.
If $\lambda(1+a)=6$ then $\lambda(3+a)=1$ and thus $\lambda(5+a)=\lambda(1+a)=6$ which again corresponds to subcase 2.

The case $\{\lambda(4-a),\lambda(5+a)\}=\{3,7\}$ is similar and leads to the same conclusion.

Finally, if $\{\lambda(4-a),\lambda(5+a)\}=\{6,7\}$ then $\lambda(3-a)=\lambda(4+a)=1$, a contradiction
since $d_{G(\{1,a,a+1\})}(4-a,5+a)=2$.

\item $\lambda(6)=6$.\\
Considering the clique $C(4)$ in $G(\{1,a,a+1\}^2)$, we get
$$\{\lambda(4-a),\lambda(5+a)\} \subseteq \{3,\lambda(2-a),\lambda(3+a)\}\setminus\{1,6\}=\{3,7\}.$$
This implies
$$\{\lambda(3-a),\lambda(4+a)\}=\{2,6\},$$
$$\{\lambda(2-a),\lambda(3+a)\}=\{1,7\},$$ 
$$\{\lambda(1-a),\lambda(2+a)\}=\{5,6\}$$
and 
$$\{\lambda(-a),\lambda(1+a)\}=\{4,7\}.$$
If $\lambda(-a)=7$ then $\lambda(2-a)=1$ and thus $\lambda(4-a)=\lambda(-a)=7$ which corresponds to subcase 2.
If $\lambda(1+a)=7$ then $\lambda(3+a)=1$ and thus $\lambda(5+a)=\lambda(1+a)=7$ which again corresponds to subcase 2.

\end{enumerate}

\item Vertices $0,1,2,3,4,5,6$ are colored with the colors $1,2,3,4,5,6$ and 1, respectively.\\
Again considering the cliques $C(1)$, $C(2)$ and $C(3)$ in $G(\{1,a,a+1\}^2)$, we get
$$\{\lambda(-a),\lambda(1-a),\lambda(1+a),\lambda(2+a)\}=\{4,5,6,7\},$$
$$\{\lambda(2-a),\lambda(3+a)\} \subseteq \{1,\lambda(-a),\lambda(1+a)\}\setminus\{4,5\},$$
and thus
$$\{\lambda(3-a),\lambda(4+a)\} \subseteq \{2,\lambda(1-a),\lambda(2+a)\}\setminus\{4,5,6\}=\{2,7\}.$$
This implies
$$\{\lambda(2-a),\lambda(3+a)\}=\{1,6\},$$ 
$$\{\lambda(1-a),\lambda(2+a)\}=\{5,7\}$$
and 
$$\{\lambda(-a),\lambda(1+a)\}=\{4,6\}.$$
Therefore, 
$$\{\lambda(4-a),\lambda(5+a)\} \subseteq \{3,\lambda(2-a),\lambda(3+a)\}\setminus\{1,6\}=\{3\},$$
a contradiction since $d_{G(\{1,a,a+1\})}(4-a,5+a)=2$.
\end{enumerate}

Therefore, every 2-distance $7$-coloring $\lambda$ of $G(\{1,a,a+1\})$ is necessarily $7$-periodic,
and thus $\chi_2(G(\{1,a,a+1\}))=7$ if and only if $7$ do not divide any element of 
$\{1,2,a-1,a,a+1,a+2,2a,2a+1,2a+2\}$.
This is clearly the case if and only if $a\equiv 2\pmod 7$ or $a\equiv 4\pmod 7$. 

\end{proof}

The following result provides an upper bound on $\chi_2(G(\{1,a,a+1\}))$ for any value of $a$.

\begin{theorem}
For every integer $a$, $a\ge 3$, $\chi_2(G(\{1,a,a+1\}))\le 9=\Delta(G(\{1,a,a+1\}))+3$.
\label{th:upper-bound-chi_2-dist-1aa+1}
\end{theorem}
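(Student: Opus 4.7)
The plan is to exhibit, for every integer $a \ge 3$, an explicit periodic 2-distance 9-coloring of $G(\{1,a,a+1\})$, following the spirit of the constructions appearing in the proof of Theorem~\ref{th:chi_2-dist-1a}. Two easy reductions dispose of many residue classes immediately. First, if $a \equiv 2 \pmod 7$ or $a \equiv 4 \pmod 7$, then Theorem~\ref{th:chi_2-dist-1aa+1} already gives $\chi_2(G(\{1,a,a+1\})) = 7 \le 9$. Second, applying Proposition~\ref{prop:k-upper-bound} with $p = 9$ to the squared set $D^2 = \{1,2,a-1,a,a+1,a+2,2a,2a+1,2a+2\}$ shows that the simple pattern $123456789$ is a valid 2-distance 9-coloring whenever $9$ divides no element of $D^2$; a direct check gives that this holds exactly when $a \equiv 2, 3, 5$ or $6 \pmod 9$.

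For the residue classes of $a$ not covered by either reduction, I would build patterns of the form $(123)^\alpha (456)^\beta (789)^\gamma$ obtained by concatenating blocks drawn from the three disjoint palettes $\{1,2,3\}$, $\{4,5,6\}$ and $\{7,8,9\}$, giving period $p = 3(\alpha+\beta+\gamma)$. Because each palette block repeats a triple of distinct colors and $p$ is a multiple of~$3$, any two positions sharing a color are at a distance that is a positive multiple of~$3$. The forbidden set $D^2$ contains at most three multiples of~$3$: exactly one among $\{a-1,a,a+1\}$, possibly $a+2$, and exactly one among $\{2a,2a+1,2a+2\}$, where the precise ones depend on $a \bmod 3$. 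The idea is then to tune $\alpha, \beta, \gamma$ so that $p$ lies near $2a$ (so that roughly nine colors suffice per period) while the within-period distances $3,6,\ldots$ inside each block and the cross-period distances $k p \pm 3 q$ simultaneously avoid those specific multiples of~$3$.

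The step I expect to be the main obstacle is the fine bookkeeping for the cross-period distances. The requirement $p \approx 2a$ pulls the period into the neighbourhood of the forbidden triple $\{2a,2a+1,2a+2\}$, so one has to shift $p$ by a small multiple of~$3$ (for instance to $2a-3, 2a+3$ or $2a+6$) and, as in the proof of Theorem~\ref{th:chi_2-dist-1a}, sometimes truncate or extend the last block to fine-tune the period — for example using patterns of the shape $(123)^{\alpha}(456)^{\beta}(789)^{\gamma-1} 78$ in the borderline cases. The analysis naturally splits into three main cases according to $a \bmod 3$, and each case may require one or two further sub-cases to handle the first few small values of $a$. Once a concrete pattern has been written down in each case, the verification reduces to checking that the finite set of same-color distances produced by the pattern is disjoint from $D^2$, which is routine.
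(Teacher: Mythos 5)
Your two opening reductions are correct (Theorem~\ref{th:chi_2-dist-1aa+1} handles $a\equiv 2,4\pmod 7$, and Proposition~\ref{prop:k-upper-bound} with $p=9$ handles $a\equiv 2,3,5,6\pmod 9$), and your choice of pattern shape --- concatenations of blocks from the palettes $\{1,2,3\}$, $\{4,5,6\}$, $\{7,8,9\}$, with a case split on $a\bmod 3$ and occasional truncated blocks --- is exactly the shape the paper uses. But the proof is not actually carried out: the explicit patterns, which are the entire content of the argument, are never written down, and the quantitative principle you propose for finding them is wrong.

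Specifically, you assert that the period $p$ should be tuned to lie near $2a$ (e.g.\ $2a-3$, $2a+3$, $2a+6$). For a pattern of the form $(123)^{\alpha}(456)^{\beta}(789)^{\gamma}$ this is impossible for large $a$. Let $M$ be the unique multiple of $3$ in $\{2a,2a+1,2a+2\}$. Color $1$ occupies positions $0,3,\dots,3(\alpha-1)$ of each period, so across consecutive periods it realizes every multiple of $3$ in the interval $[\,p-3(\alpha-1),\,p+3(\alpha-1)\,]$; since $p\equiv 0\pmod 3$, avoiding $M$ forces $|p-M|\ge 3\alpha$, and likewise $|p-M|\ge 3\beta$ and $|p-M|\ge 3\gamma$. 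As $p=3(\alpha+\beta+\gamma)$, one of $\alpha,\beta,\gamma$ is at least $p/9$, so $|p-M|\ge p/3$, i.e.\ either $p\ge \tfrac{3}{2}M\approx 3a$ or $p\le \tfrac{3}{4}M\approx \tfrac{3a}{2}$. A period near $2a$ sits squarely in the excluded middle: for instance with $a=27$ (which your reductions do not cover if one drops the accidental fact that $27\equiv 6\pmod 7$; take $a=36$ instead, with $36\equiv 1\pmod 7$ and $36\equiv 0\pmod 9$) one has $M=2a=72\in D^2$, and any $p\in\{2a-6,\dots,2a+6\}$ forces every color to appear at most twice per period, so $p\le 18\ll 2a$, a contradiction. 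The paper instead takes $p$ equal to $3a$, $3a+1$ or $3a+2$ according to $a\bmod 3$, with each palette occupying roughly $a$ consecutive positions (so that within-block gaps stay below $a-1$ and the first cross-period gaps start above $2a+2$), and the $a\equiv 1\pmod 3$ case needs a genuinely more intricate interleaved pattern, $(123)^k(456)^k7123(789)^{k-1}4568$, rather than a mere truncation. Without the corrected period and the explicit patterns, the argument does not go through.
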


\begin{proof}
We consider three cases, according to the value of $(a\mod 3)$:
\begin{enumerate}
\item $a=3k$, $k\ge 1$.\\
Let $\lambda$ be the $3a$-periodic mapping defined by the pattern
$$(123)^k(456)^k(789)^k.$$
If $\lambda(x)=\lambda(y)=c$, $1\le c\le 9$, then 
$$d(x,y)\in\{3q,\ 0\le q\le k-1\}\ \cup\ \{3ap+3q,\ p\ge 1,\ 1-k\le q\le k-1\}.$$
Therefore, $d(x,y)\not\in\{1,2,a-1,a,a+1,a+2,2a,2a+1,2a+2\}$, and thus $\lambda$ is a 2-distance 9-coloring of $G(\{1,a,a+1\})$. 

\item $a=3k+1$, $k\ge 1$.\\
Let $\lambda$ be the $(3a+2)$-periodic mapping defined by the pattern
$$(123)^k(456)^k7123(789)^{k-1}4568.$$
If $\lambda(x)=\lambda(y)=c$, $1\le c\le 6$, then 
$$\begin{array}{rl}
d(x,y)\in & \{3q,\ 0\le q\le k-1\}\\
 & \cup\ \{3q+2a-1,\ 1-k\le q\le 0\}\\
 & \cup\ \{(3a+2)p+2a-1,\ p>0\}\\
 & \cup\ \{(3a+2)p-2a+1,\ p>0\}\\
 & \cup\ \{(3a+2)p+3q,\ p>0,\ 1-k\le q<0\}\\
 & \cup\ \{(3a+2)p+3q+2a-1,\ p>0,\ 1-k\le q<0\}\\
 & \cup\ \{(3a+2)p+3q,\ p>0,\ 0<q\le k-1\}\\
 & \cup\ \{(3a+2)p+3q-2a+1,\ p>0,\ 0<q\le k-1\}.
\end{array}$$
If $\lambda(x)=\lambda(y)=7$, then 
$$\begin{array}{rl}
d(x,y)\in & \{3q,\ 0\le q\le k-2\}\\
 & \cup\ \{3q+4,\ 0\le q\le k-2\}\\
 & \cup\ \{(3a+2)p+3q-4,\ p>0,\ 2-k\le q\le 0\}\\
 & \cup\ \{(3a+2)p+3q+4,\ p>0,\ 0\le q\le k-2\}\\
 & \cup\ \{(3a+2)p+3q,\ p>0,\ 2-k\le q\le k-2\}.
\end{array}$$
If $\lambda(x)=\lambda(y)=8$, then 
$$\begin{array}{rl}
d(x,y)\in & \{3q,\ 0\le q\le k-2\}\\
 & \cup\ \{3q+a-2,\ 2-k\le q\le 0\}\\
 & \cup\ \{(3a+2)p+a-2,\ p>0\}\\
 & \cup\ \{(3a+2)p-a+2,\ p>0\}\\
 & \cup\ \{(3a+2)p+3q,\ p>0,\ 2-k\le q<0\}\\
 & \cup\ \{(3a+2)p+3q+a-2,\ p>0,\ 2-k\le q<0\}\\
 & \cup\ \{(3a+2)p+3q,\ p>0,\ 0<q\le k-2\}\\
 & \cup\ \{(3a+2)p+3q-a+2,\ p>0,\ 0<q\le k-2\}.
\end{array}$$
If $\lambda(x)=\lambda(y)=9$, then 
$$d(x,y)\in\{3q,\ 0\le q\le k-2\}\ \cup\ \{(3a+2)p+3q,\ p\ge 1,\ 2-k\le q\le k-2\}.$$
Therefore, in all these cases, $d(x,y)\not\in\{1,2,a-1,a,a+1,a+2,2a,2a+1,2a+2\}$, and thus $\lambda$ is a 2-distance 9-coloring of $G(\{1,a,a+1\})$. 

\item $a=3k+2$, $k\ge 1$.\\
Let $\lambda$ be the $(3a+1)$-periodic mapping defined by the pattern
$$(123)^{k+1}(456)^{k+1}(789)^k7.$$
If $\lambda(x)=\lambda(y)=c$, $1\le c\le 7$, then 
$$d(x,y)\in\{3q,\ 0\le q\le k\}\ \cup\ \{(3a+1)p+3q,\ p\ge 1,\ -k\le q\le k\}.$$
If $\lambda(x)=\lambda(y)=c$, $8\le c\le 9$, then 
$$d(x,y)\in\{3q,\ 0\le q\le k-1\}\ \cup\ \{(3a+1)p+3q,\ p\ge 1,\ 1-k\le q\le k-1\}.$$
Therefore, in both cases, $d(x,y)\not\in\{1,2,a-1,a,a+1,a+2,2a,2a+1,2a+2\}$, and thus $\lambda$ is a 2-distance 9-coloring of $G(\{1,a,a+1\})$. 
\end{enumerate}
This concludes the proof. 
\end{proof}

From Theorems~\ref{th:chi_2-dist-1aa+1} and~\ref{th:upper-bound-chi_2-dist-1aa+1},
we thus get:

\begin{corollary}
For every integer $a$, $a\ge 3$, $a\not\equiv 2,4\pmod 7$, 
$$8\le \chi_2(G(\{1,a,a+1\}))\le 9.$$
\label{cor:chi_2-1aa+1-encadrement}
\end{corollary}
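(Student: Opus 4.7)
The plan is to obtain both bounds as immediate consequences of the two theorems proved earlier in this section, combined with the general lower bound of Observation~\ref{obs:lower-bound}. Since the statement is a corollary with the hypothesis $a\not\equiv 2,4\pmod 7$, I expect no new combinatorial argument to be needed; the work is simply a matter of assembling the pieces in the right order.

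For the upper bound, I would cite Theorem~\ref{th:upper-bound-chi_2-dist-1aa+1} directly: it gives $\chi_2(G(\{1,a,a+1\}))\le 9$ for every $a\ge 3$, independently of the residue of $a$ modulo $7$. So nothing remains to prove on that side.

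For the lower bound, I would proceed in two steps. First, Observation~\ref{obs:lower-bound} yields $\chi_2(G(\{1,a,a+1\}))\ge \Delta(G(\{1,a,a+1\}))+1=7$, since when $a\ge 3$ we have $\Delta(G(\{1,a,a+1\}))=6$. Second, Theorem~\ref{th:chi_2-dist-1aa+1} asserts that $\chi_2(G(\{1,a,a+1\}))=7$ holds \emph{only} when $a\equiv 2\pmod 7$ or $a\equiv 4\pmod 7$. Contrapositively, under our hypothesis $a\not\equiv 2,4\pmod 7$, equality with $7$ fails, so the lower bound of $7$ must be strict, giving $\chi_2(G(\{1,a,a+1\}))\ge 8$.

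There is no real obstacle here, since the hard combinatorial content has already been carried out in the proofs of the two invoked theorems (in particular, the case analysis forcing $7$-periodicity of any $7$-coloring in Theorem~\ref{th:chi_2-dist-1aa+1}). The corollary is stated only to record the resulting two-value bracketing $8\le \chi_2(G(\{1,a,a+1\}))\le 9$ for the residues not covered by Theorem~\ref{th:chi_2-dist-1aa+1}, and the proof is just a single sentence combining the above.
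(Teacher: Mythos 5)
Your proposal is correct and matches the paper exactly: the corollary is stated there as an immediate consequence of Theorems~\ref{th:chi_2-dist-1aa+1} and~\ref{th:upper-bound-chi_2-dist-1aa+1}, with the lower bound $8$ following from Observation~\ref{obs:lower-bound} together with the ``only if'' direction of Theorem~\ref{th:chi_2-dist-1aa+1}, just as you argue.
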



\section{The case $D=\{1,\dots,m,a\}$, $2\le m<a$}
\label{sec:1toma}

We study in this section the 2-distance 
chromatic number of distance graphs $G(D)$
with $D=\{1,\dots,m,a\}$, $2\le m<a$ 
(note that the case $a=m+1$ is already solved by Proposition~\ref{prop:chi2-1k}).

When $D=\{1,\dots,m,a\}$, we have $\Delta(G(D))=2m+2$ and
$$D^2=\{1,2,\dots,2m\}\ \cup\ \{a-m,a-m+1,\dots,a+m\}\ \cup\ \{2a\}.$$

We first prove the following:

\begin{theorem}
For all integers $m$ and $a$, $2\le m<a$, 
$$\chi_2(G(\{1,\dots,m,a\}))=2m+3=\Delta(G(\{1,\dots,m,a\}))+1$$
 if and only if 
$a\equiv m+1\pmod{2m+3}$ or $a\equiv m+2\pmod{2m+3}$.
\label{th:chi_2-dist-1ma}
\end{theorem}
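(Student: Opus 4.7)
My plan is to establish both directions of the equivalence, with the lower bound $\chi_2(G(\{1,\dots,m,a\}))\ge 2m+3=\Delta(G(\{1,\dots,m,a\}))+1$ being immediate from Observation~\ref{obs:lower-bound}. For the sufficient direction I would apply Proposition~\ref{prop:k-upper-bound} with $p=2m+3$ and verify that no element of $D^2=\{1,\dots,2m\}\cup\{a-m,\dots,a+m\}\cup\{2a\}$ is divisible by $2m+3$: the first block lies strictly below $2m+3$; the middle block consists of $2m+1$ consecutive integers whose residues modulo $2m+3$ are $\{1,\dots,2m+1\}$ if $a\equiv m+1$ and $\{2,\dots,2m+2\}$ if $a\equiv m+2$, neither of which contains $0$; and $2a\equiv 2m+2$ or $\equiv 1\pmod{2m+3}$ is nonzero since $\gcd(2,2m+3)=1$. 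This produces a proper $(2m+3)$-coloring of $G(D^2)$, hence the equality $\chi_2=2m+3$.

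\medskip

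For the necessary direction, following the template of the proofs of Theorems~\ref{th:chi_2-dist-1a} and~\ref{th:chi_2-dist-1aa+1}, I would exploit the fact that for every $x\in\ZZZ$ the set $N[x]=\{x-a\}\cup\{x-m,\dots,x+m\}\cup\{x+a\}$ is a clique of size $2m+3$ in $G(D^2)$, so under any 2-distance $(2m+3)$-coloring $\lambda$ it receives all $2m+3$ colors exactly once; also, any $2m+1$ consecutive integers form a clique since $\{1,\dots,2m\}\subseteq D^2$. The central claim is that any $2m+3$ consecutive integers receive $2m+3$ pairwise distinct colors: comparing the two length-$(2m+3)$ windows $\{x,\dots,x+2m+2\}$ and $\{x+1,\dots,x+2m+3\}$, both necessarily using all colors, then forces $\lambda(x+2m+3)=\lambda(x)$ for every $x$, so $\lambda$ is $(2m+3)$-periodic. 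Such a coloring descends to a proper coloring of $\ZZZ/(2m+3)\ZZZ$, so $2m+3$ divides no element of $D^2$, which by the residue computation above forces $a\equiv m+1$ or $a\equiv m+2\pmod{2m+3}$.

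\medskip

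To prove the central claim, I would argue by contradiction: suppose $\lambda(i)=\lambda(j)$ for some $0\le i<j\le 2m+2$. Since any $2m+1$ consecutive vertices form a clique, we must have $j-i\in\{2m+1,2m+2\}$, and after translation the collision reduces to either~(A) $\lambda(0)=\lambda(2m+1)$ or~(B) $\lambda(0)=\lambda(2m+2)$. In~(A), comparing the cliques $N[m]$ and $N[m+1]$ shows that $\{\lambda(m-a),\lambda(m+a)\}=\{\lambda(m+1-a),\lambda(m+1+a)\}=\{P,Q\}$, where $\{P,Q\}$ are the two colors missing from $\{\lambda(0),\dots,\lambda(2m)\}$; the distance-$1$ constraints then force $\lambda(m-a)=\lambda(m+1+a)$ and $\lambda(m+a)=\lambda(m+1-a)$, and iterating with $N[m-1],N[m+2],\dots$ propagates forced equalities until two vertices at a distance in $D^2$ receive the same color. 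Case~(B) proceeds analogously: $C=\lambda(0)=\lambda(2m+2)$ cannot repeat in $\{\lambda(1),\dots,\lambda(2m+1)\}$ (each such vertex lies within distance $2m$ of $0$ or $2m+2$), so $C$ must be one of the two missing colors in the clique $N[m+1]$, forcing $\lambda(m+1+a)=C$ or $\lambda(m+1-a)=C$, and this equality then propagates through $N[m+2],N[m+3],\dots$ until a pair at a $D^2$-distance coincides. The main obstacle I expect is the bookkeeping in this propagation: with $m$ a free parameter, the ``shifted'' positions proliferate and the various ways propagated distances may coincide with elements of $\{1,\dots,2m\}$, $\{a-m,\dots,a+m\}$, or $\{2a\}$ make it delicate to extract the contradiction uniformly in $a$ rather than by ad hoc subcase enumeration.
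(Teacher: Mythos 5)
Your proposal follows essentially the same route as the paper: the same reduction of sufficiency to Proposition~\ref{prop:k-upper-bound} with $p=2m+3$, the same periodicity argument via showing that $2m+3$ consecutive vertices receive distinct colors, and the same reduction to the two collision cases $\lambda(0)=\lambda(2m+1)$ and $\lambda(0)=\lambda(2m+2)$ using the clique on $2m+1$ consecutive integers and the cliques $N[x]$. The one place you hedge --- the fear that the propagation of forced colors requires delicate, $a$-dependent bookkeeping --- is unfounded, and this is exactly the step you should pin down: the contradiction arrives after only two steps of propagation. From $N[m]$ and $N[m+1]$ one gets $\{\lambda(m-a),\lambda(m+a)\}$ equal to the two colors missing on $\{0,\dots,2m\}$ and then determines $\{\lambda(m+1-a),\lambda(m+1+a)\}$ (equal to the same pair in case~(A), and to $\{1,2m+3\}$ in case~(B), with the assignment on each side forced by the distance-$1$ adjacencies $m\pm a\sim m+1\pm a$). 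Then $m+2-a$ is adjacent in $G(D^2)$ to all of $2,\dots,2m+1$ (colors $\{1,3,4,\dots,2m+1\}$ in case~(A)) and to both $m-a$ and $m+1-a$, which between them carry the remaining two colors other than $2$; hence $\lambda(m+2-a)=2$, and symmetrically $\lambda(m+2+a)=2$, contradicting $|(m+2+a)-(m+2-a)|=2a\in D^2$. No further iteration over $N[m-1],N[m+3],\dots$ and no case analysis on $a$ is needed, so your sketch completes to a correct proof identical in substance to the paper's.
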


\begin{proof}
Since $\{1,\dots,m,a\}^2=\{1,\dots,2m\}\cup\{a-m,a-m+1,\dots,a+m\}\cup \{2a\}$,
$d\not\equiv 0\pmod{2m+3}$ for every $d\in \{1,\dots,m,a\}^2$ whenever $a\equiv m+1\pmod{2m+3}$ or $a\equiv m+2\pmod{2m+3}$,
 and thus, by Proposition~\ref{prop:k-upper-bound}
and Observation~\ref{obs:lower-bound},
$\chi_2(G(\{1,\dots,m,a\}))=2m+3$.

We now claim that every 2-distance $(2m+3)$-coloring $\lambda$ of $G(\{1,\dots,m,a\})$ is necessarily $(2m+3)$-periodic,
that is $\lambda(x+2m+3)=\lambda(x)$ for every $x\in\ZZZ$.
To show that, it suffices to prove that any $2m+3$ consecutive vertices $x,\dots,x+2m+2$ must be assigned distinct colors.
Assume to the contrary that this is not the case and, without loss of generality, let $x=0$.
Since vertices $0,1,\dots,2m$ necessarily get distinct colors, we only have two cases to consider:

\begin{enumerate}

\item Vertices $0,1,\dots,2m+1$ are colored with the colors $1,2,\dots,2m+1$ and 1, respectively.\\
Note that vertices $m-a$ and $m+a$ are both adjacent to all vertices $0,1,\dots,2m$. 
Hence, 
$$\{\lambda(m-a),\lambda(m+a)\}=\{2m+2,2m+3\},$$
which implies 
$$\{\lambda(m+1-a),\lambda(m+1+a)\}=\{2m+2,2m+3\}$$
(more precisely, $\lambda(m+1-a)=4m+5-\lambda(m-a)$ and $\lambda(m+1+a)=4m+5-\lambda(m+a)$).
This implies $\lambda(m+2-a)=\lambda(m+2+a)=2$, a contradiction since $d(m+2-a,m+2+a)=2$.

\item Vertices $0,1,\dots,2m+1,2m+2$ are colored with the colors $1,2,\dots,2m+1,2m+2$ and 1, respectively.\\
As in the previous case we have 
$$\{\lambda(m-a),\lambda(m+a)\}=\{2m+2,2m+3\},$$
which implies  
$$\{\lambda(m+1-a),\lambda(m+1+a)\}=\{1,2m+3\}.$$
We thus get $\lambda(m+2-a)=\lambda(m+2+a)=2$, again a contradiction.

\end{enumerate}

Therefore, every 2-distance $(2m+3)$-coloring $\lambda$ of $G(\{1,\dots,m,a\})$ is necessarily $(2m+3)$-periodic,
and thus $\chi_2(G(\{1,\dots,m,a\}))=2m+3$ if and only if $2m+3$ do not divide any element of 
$\{1,2,\dots,2m\}\ \cup\ \{a-m,a-m+1,\dots,a+m\}\ \cup\ \{2a\}$.
This is clearly the case if and only if $a\equiv m+1\pmod{2m+3}$ or $a\equiv m+2\pmod{2m+3}$. 

\end{proof}


For other values of $a$, we propose the following general upper bound on

\begin{theorem}
For all integers $m$ and $a$, $2\le m<a$, 
$$\chi_2(G(\{1,\dots,m,a\}))\le 4m+2=2\Delta(G(\{1,\dots,m,a\}))-2.$$
\label{th:upper-bound-chi_2-dist-1ma}
\end{theorem}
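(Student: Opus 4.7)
The plan is to construct an explicit periodic 2-distance $(4m+2)$-coloring of $G(\{1,\dots,m,a\})$, generalizing the construction used for $m=1$ in the proof of Theorem~\ref{th:chi_2-dist-1a} and for $D=\{1,a,a+1\}$ in Theorem~\ref{th:upper-bound-chi_2-dist-1aa+1}. I would partition the $4m+2$ available colors into two palettes $A=(1,2,\dots,2m+1)$ and $B=(2m+2,2m+3,\dots,4m+2)$ of size $2m+1$. Used as a block of $2m+1$ consecutive positions, each palette uses every one of its colors exactly once, so the forbidden distances $1,2,\dots,2m$ are automatically excluded within each block.

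Writing $a=q(2m+1)+r$ with $0\le r\le 2m$, I would perform case analysis on the residue $r$, and in each case exhibit a pattern of the form $A^{q_A}B^{q_B}T$, where $T$ is a short (possibly empty) tail and the period $p=(q_A+q_B)(2m+1)+|T|$ is chosen close to $2a$. Generalizing the three-case formulas of Theorem~\ref{th:chi_2-dist-1a}, a natural first try is $p=2a-1$ when $r=0$, $p=2a-2$ when $r=1$, and $p=2a+c(r)$ for some small correction $c(r)\in\{-2m,\dots,2m+1\}$ otherwise. The number of $A$- and $B$-blocks is then forced by the requirement that the pattern length equals $p$.

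The verification in each case follows the same template as the proofs of Theorems~\ref{th:chi_2-dist-1a} and~\ref{th:upper-bound-chi_2-dist-1aa+1}: for every color $c$, the set of distances between two same-color positions splits into an intra-period family of multiples of $2m+1$ bounded by $(q_A-1)(2m+1)$ or $(q_B-1)(2m+1)$, together with an inter-period family of values of the form $np\pm j(2m+1)$ for $n\ge 1$ and bounded $j$. The task then reduces to checking that no such distance lies in $\{1,\dots,2m\}\cup\{a-m,\dots,a+m\}\cup\{2a\}$.

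The principal obstacle is the bookkeeping across several subcases: for each residue $r$ one must ensure that the intra-period multiples of $2m+1$ skip the critical window $\{a-m,\dots,a+m\}$ and that one full period shift does not reintroduce a distance equal to $2a$. The cases where $p$ must depart most from $2a$, namely $r=0$ and $r$ close to $2m$, are expected to be the most delicate, and will likely require an asymmetric arrangement $A^{q_A}B^{q_B}$ together with a carefully tuned tail, in the spirit of the pattern $(123)^k(456)^{k-1}45$ used in the $m=1$ analysis.
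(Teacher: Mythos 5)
Your plan coincides with the paper's proof: the paper writes $a=(2m+1)k+r$ and, in four cases ($r<m$, $r=m$, $r=m+1$, $m+2\le r\le 2m$), uses exactly the pattern shape $A^{q_A}B^{q_B}T$ you describe with two palettes of $2m+1$ colors, choosing periods $2a-r-m$, $2a-2m$, $2a+1$ and $2a-r+m+1$ respectively, and then performs the same intra-period/inter-period distance bookkeeping. The only thing your sketch leaves open is pinning down these periods and tails for each residue class, which is precisely the case analysis the paper carries out.
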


\begin{proof}
Let $a=(2m+1)k+r$, $0\le r < 2m+1$.
We consider four cases, depending on the value of $r$.
In each case, we will provide a periodic 2-distance $(4m+2)$-coloring of 
the distance graph $G(\{1,\dots,m,a\})$.
\begin{enumerate}

\item $r < m$.\\ 
Let $\lambda$ be the $(2a-r-m)$-periodic mapping defined by the pattern
$$[12\ldots (2m+1)]^k[(2m+2)(2m+1)\ldots (4m+2)]^{k-1}(2m+2)(2m+3)\ldots (3m+r+2).$$
If $\lambda(x)=\lambda(y)=c$, $1\le c\le 3m+r+2$, then
$$\begin{array}{rl}
d(x,y)\in & \{q(2m+1),\ 0\le q\le k-1\}\\
 & \cup\ \{p(2a-r-m)+q(2m+1),\ p\ge 1,\ 1-k\le q\le k-1\}.
\end{array}$$
If $\lambda(x)=\lambda(y)=c$, $3m+r+3\le c\le 4m+2$, then 
$$\begin{array}{rl}
d(x,y)\in & \{q(2m+1),\ 0\le q\le k-2\}\\
 & \cup\ \{p(2a-r-m)+q(2m+1),\ p\ge 1,\ 2-k\le q\le k-2\}.
\end{array}$$
Therefore, in both cases, $d(x,y)\not\in\{1,2,\dots,2m\}\cup\{a-m,a-m+1,\dots,a+m\}\cup\{2a\}$, 
and thus $\lambda$ is a 2-distance $(4m+2)$-coloring of $G(\{1,\dots,m,a\})$.

\item $r=m$.\\
Let $\lambda$ be the $(2a-2m)$-periodic mapping defined by the pattern
$$[12\ldots (2m+1)]^k[(2m+2)(2m+1)\ldots (4m+2)]^k.$$
If $\lambda(x)=\lambda(y)=c$, $1\le c\le 4m+2$, then
$$\begin{array}{rl}
d(x,y)\in & \{q(2m+1),\ 0\le q\le k-1\}\\
 & \cup\ \{p(2a-2m)+q(2m+1),\ p\ge 1,\ 1-k\le q\le k-1\}.
\end{array}$$
Therefore, $d(x,y)\not\in\{1,2,\dots,2m\}\cup\{a-m,a-m+1,\dots,a+m\}\cup\{2a\}$, 
and thus $\lambda$ is a 2-distance $(4m+2)$-coloring of $G(\{1,\dots,m,a\})$.

\item $r=m+1$.\\
Let $\lambda$ be the $(2a+1)$-periodic mapping defined by the pattern
$$[12\ldots (2m+1)]^{k+1}[(2m+2)(2m+1)\ldots (4m+2)]^k(2m+2)(2m+3).$$
If $\lambda(x)=\lambda(y)=c$, $1\le c\le 2m+3$, then
$$\begin{array}{rl}
d(x,y)\in & \{q(2m+1),\ 0\le q\le k\}\\
 & \cup\ \{p(2a+1)+q(2m+1),\ p\ge 1,\ -k\le q\le k\}.
\end{array}$$
If $\lambda(x)=\lambda(y)=c$, $2m+4\le c\le 4m+2$, then 
$$\begin{array}{rl}
d(x,y)\in & \{q(2m+1),\ 0\le q\le k-1\}\\
 & \cup\ \{p(2a+1)+q(2m+1),\ p\ge 1,\ 1-k\le q\le k-1\}.
\end{array}$$
Therefore, in both cases, $d(x,y)\not\in\{1,2,\dots,2m\}\cup\{a-m,a-m+1,\dots,a+m\}\cup\{2a\}$, 
and thus $\lambda$ is a 2-distance $(4m+2)$-coloring of $G(\{1,\dots,m,a\})$.

\item $m+2 \le r < 2m+1$.\\
Let $\lambda$ be the $(2a-r+m+1)$-periodic mapping defined by the pattern
$$[12\ldots (2m+1)]^{k+1}[(2m+2)(2m+1)\ldots (4m+2)]^{k}(2m+2)(2m+3)\ldots (m+r+1).$$
If $\lambda(x)=\lambda(y)=c$, $1\le c\le m+r+1$, then
$$\begin{array}{rl}
d(x,y)\in & \{q(2m+1),\ 0\le q\le k\}\\
 & \cup\ \{p(2a-r+m+1)+q(2m+1),\ p\ge 1,\ -k\le q\le k\}.
\end{array}$$
If $\lambda(x)=\lambda(y)=c$, $m+r+2\le c\le 4m+2$, then 
$$\begin{array}{rl}
d(x,y)\in & \{q(2m+1),\ 0\le q\le k-1\}\\
 & \cup\ \{p(2a-r+m+1)+q(2m+1),\ p\ge 1,\ 1-k\le q\le k-1\}.
\end{array}$$
Therefore, in both cases, $d(x,y)\not\in\{1,2,\dots,2m\}\cup\{a-m,a-m+1,\dots,a+m\}\cup\{2a\}$, 
and thus $\lambda$ is a 2-distance $(4m+2)$-coloring of $G(\{1,\dots,m,a\})$. 

\end{enumerate}
This concludes the proof.
\end{proof}

From Theorems~\ref{th:chi_2-dist-1ma} and~\ref{th:upper-bound-chi_2-dist-1ma},
we thus get:

\begin{corollary}
For all integers $m$ and $a$, $2\le m<a$, $a\not\equiv m+1,m+2\pmod{2m+3}$, 
$$2m+4\le \chi_2(G(\{1,\ldots,m,a\}))\le 4m+2.$$
\label{cor:chi_2-dist-1ma-encadrement}
\end{corollary}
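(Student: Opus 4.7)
The plan is to obtain this corollary as an immediate combination of the two immediately preceding results: Theorem~\ref{th:chi_2-dist-1ma} and Theorem~\ref{th:upper-bound-chi_2-dist-1ma}. No new construction or case analysis is needed, so the ``proof'' will be essentially one short paragraph.

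For the upper bound, I would simply invoke Theorem~\ref{th:upper-bound-chi_2-dist-1ma}, which asserts unconditionally that $\chi_2(G(\{1,\dots,m,a\}))\le 4m+2$ for all $2\le m<a$. This inequality in particular holds under the additional congruence restriction $a\not\equiv m+1,m+2\pmod{2m+3}$ assumed in the statement, so nothing further has to be verified on that side.

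For the lower bound, I would combine Observation~\ref{obs:lower-bound} with the ``only if'' direction of Theorem~\ref{th:chi_2-dist-1ma}. Since $\Delta(G(\{1,\dots,m,a\}))=2m+2$, Observation~\ref{obs:lower-bound} gives $\chi_2(G(\{1,\dots,m,a\}))\ge 2m+3$. Theorem~\ref{th:chi_2-dist-1ma} characterizes precisely when equality holds, namely when $a\equiv m+1\pmod{2m+3}$ or $a\equiv m+2\pmod{2m+3}$. The hypothesis of the corollary excludes both residue classes, so equality is impossible and the inequality is strict, yielding $\chi_2(G(\{1,\dots,m,a\}))\ge 2m+4$.

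There is no real obstacle here; the only point worth highlighting is that the strict lower bound $2m+4$ relies on using Theorem~\ref{th:chi_2-dist-1ma} as a biconditional (its contrapositive), not merely as a sufficient condition. Once that is noted, the two bounds together give the displayed inequality, and the proof is complete.
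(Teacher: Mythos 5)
Your proposal is correct and matches the paper's (implicit) argument exactly: the paper derives the corollary directly from Theorems~\ref{th:chi_2-dist-1ma} and~\ref{th:upper-bound-chi_2-dist-1ma}, with the upper bound taken verbatim from the latter and the strict lower bound $2m+4$ following from Observation~\ref{obs:lower-bound} together with the ``only if'' direction of the former. Your remark that the characterization must be used as a biconditional is precisely the point that makes the lower bound work.
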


\section{Discussion}
\label{sec:discussion}

In this paper, we studied 2-distance colorings of several types of distance graphs.
In each case, we characterized those distance graphs that admit an optimal 2-distance
coloring, that is distance graphs $G(D)$ with $\chi_2(G(D))=\Delta(G(D))+1$.
We also provided general upper bounds for the 2-distance chromatic number of the
considered graphs.

We leave as open problems the question of completely determining the 2-distance
chromatic number of distance graphs $G(D)$ when 
$D=\{1,a,a+1\}$, $a\ge 3$,
or $D=\{1,\dots,m,a\}$, $2\le m<a$.

Considering other types of sets $D$ would certainly be also an interesting direction for future research.

\bigskip

\noindent{\bf Acknowledgement.}
Most of this work has been done while the first author was visiting LaBRI, thanks to a grant from 
University of Sciences and Technology Houari Boumediene (USTHB).
The second author was partially supported by the Cluster of excellence CPU, from the
Investments for the future Programme IdEx Bordeaux (ANR-10-IDEX-03-02).


\end{document}